\newcommand{\masoud}[1]{  \ifthenelse{\boolean{showcomments}}
{ \textcolor{red}{(Masoud says:  #1)}} {}  }
\newcommand{\chris}[1]{\ifthenelse{\boolean{showcomments}}
{ \textcolor{red}{(Chris says: #1)} } {} }
\newcommand{\slow}[1]{\ifthenelse{\boolean{showcomments}}
{ \textcolor{red}{(Steven says:  #1)}}{}}
\newcommand{\mani}[1]{\ifthenelse{\boolean{showcomments}}
{ \textcolor{red}{(Mani says:  #1)}}{}}
\tikzstyle{decision} = [diamond, draw,
\tikzstyle{block} = [rectangle, draw,
\tikzstyle{line} = [draw, -latex']
\tikzstyle{cloud} = [draw, ellipse, node distance=3cm,
\newtheorem{theorem}{Theorem}
\def\ba{\begin{array}}
\def\ea{\end{array}}
\newcommand{\beq}{\begin{equation}}
\newcommand{\eeq}{\end{equation}}
\newcommand{\bq}{\begin{eqnarray}}
\newcommand{\eq}{\end{eqnarray}}
\newcommand{\bqn}{\begin{eqnarray*}}
\newcommand{\eqn}{\end{eqnarray*}}
\newcommand{\bee}{\begin{enumerate}}
\newcommand{\eee}{\end{enumerate}}
\newcommand{\bi}{\begin{itemize}}
\newcommand{\ei}{\end{itemize}}
\newcommand{\btab}{\begin{tabular}}
\newcommand{\etab}{\end{tabular}}
\begin{document}


\title{Optimal Inverter VAR Control in Distribution Systems with
High PV Penetration}


\author{\IEEEauthorblockN{Masoud Farivar\IEEEauthorrefmark{2}\IEEEauthorrefmark{1},
Russell Neal\IEEEauthorrefmark{2},
Christopher Clarke \IEEEauthorrefmark{2},
Steven Low \IEEEauthorrefmark{1}\\
\IEEEauthorblockA{\IEEEauthorrefmark{2}Southern California Edison, Rosemead, CA, USA}\\
\IEEEauthorblockA{\IEEEauthorrefmark{1}Department of Electrical Engineering,
Caltech, CA, USA}} }

\maketitle

\begin{abstract}
The intent of the study detailed in this paper is to demonstrate the benefits of inverter
var control on a fast timescale to mitigate rapid and large voltage fluctuations due to the
high penetration of photovoltaic generation and the resulting reverse power flow.  Our approach is to formulate the volt/var control as a radial optimal power flow (OPF)
problem to minimize line losses and energy consumption, subject
to constraints on voltage magnitudes.
An efficient solution to the radial OPF problem is presented and used to study the structure of
optimal inverter var injection and the net benefits, taking into account the additional cost of inverter losses when operating at non-unity power factor. This paper will illustrate how,
depending on the circuit topology and its loading condition, the inverter's optimal reactive
power injection is not necessarily monotone with respect to their real power output. The results
are demonstrated on a distribution feeder on the Southern California Edison system that has a
very light load and a 5 MW photovoltaic (PV) system installed away from the substation.

\end{abstract}

\vspace{2mm}
\begin{keywords}
Distribution systems, volt/var control, DC/AC inverter, optimal power flow, photovoltaics (PV) generation
\end{keywords}

\section{Introduction}

Sustainability of electric power systems requires development and massive integration of renewable energy sources. California has embarked on several initiatives to reach its ambitious goals in increasing the share of renewable energy
in its total energy mix. One in particular, the California Solar Initiative (CSI), is aimed at realizing 3,000 MW of new solar generation by 2016. As a result, distribution planners of the state's electric utilities are facing a rapidly increasing number of
integration request for small size residential PV as well as large-scale commercial PV systems. Solar energy is highly intermittent and this introduces several challenges to existing utility operation and control methods. One major hurdle
will be the volt/var control (VVC) in distribution circuits, which necessitates more advanced designs for much faster monitoring and control systems.

Regulating the voltage in distribution circuits within the acceptable range specified by the American National Standard Institute (ANSI) Standard C84.1 for power quality is an essential responsibility for utility companies. Traditionally voltage profile and reactive power flow in distribution feeders has been locally controlled using switched devices such as shunt capacitor banks, on-load tap changers (OLTCs) and voltage regulators. These devices are expected to switch only a few times a day to accommodate relatively slow variations in load, but this may not be sufficient for coping with the more rapid fluctuations associated with renewable generation. In addition to their active power, many
inverters have the capability to inject or absorb reactive power. However, according to IEEE 1547, the existing standard for integration of distributed energy resources (DERs), inverters should not actively participate in voltage/var regulation. This limitation may have been appropriate for low levels of penetration, but as utilities move towards higher penetration of renewables, it becomes necessary to exploit the advanced control capability of inverter interfaces to the grid.

We augment the traditional volt/var control through
switched controllers on a slow timescale with inverter control
on a fast timescale. The approach will be to cast the VVC
problem as optimal power flow (OPF) problem and solve it using conic relaxation of DistFlow representation of radial power flow equations. Constrained by voltage and reactive power flow limits, our objective function includes not only minimizing line losses, but also minimizing energy consumption through Conservation Voltage Reduction (CVR) and inverter losses.

The rest of this paper is organized as follows.
Section II discusses the potential challenges in voltage regulation
of distribution circuits  due to the high penetration of intermittent energy resources and the potential benefits of using DC/AC inverters to mitigate this problem and save energy. Section III mathematically formulates the objectives and constraints of the inverter var control problem and provides a Second Order Cone Program (SOCP) solution.  In Section IV,
we evaluate the proposed inverter var control using one of
SCE's distribution feeder that has a light load and a large 5MW
PV system installed far from the substation. We explain the
structure of the optimal inverter var injection as solar output
and as load vary, and demonstrate the improvement in voltage regulation and
efficiency under the optimal var control. We conclude in
Section V.

\section{Motivation for inverter var control}
The connection of large amounts of solar power at the distribution level presents a number of technical issues to address, such as load following, resource adequacy for contingencies, stability (given loss of system inertia), low voltage ride-through, line capacity, short circuit contribution, protection impacts of bi-directional power flow, distribution system planning and operation (load rolling), and voltage control.

When a large solar generator is interconnected to a distribution circuit, the real power it injects tends to cause a local voltage rise due primarily to the substation bus voltage and the resistance of the circuit back to the substation. In some cases this rise is large, and due to variable output may cause adverse voltage fluctuations for other connected customers. These fluctuations may also cause utility voltage regulating elements such as line regulators and capacitors to operate too frequently. The higher voltage will also work against Conservation Voltage Reduction (CVR) strategies. Since 1976, California has required utilities to provide voltage in the lower half of the ANSI C84.1 range (114 to 120 volts) for CVR purposes. Significant energy savings are attainable by better regulation of customer supply voltage, and for many years Southern California Edison (SCE) has explored closed loop capacitor control methods for this purpose \cite{R10}.

Southern California Edison operates a number of distribution circuits with high levels of photovoltaic generator penetration. Under the Solar Photovoltaic Program (SPVP) 500 MW of warehouse rooftop and ground mounted commercial generation in the 1 to 10 MW range are being deployed. This paper uses a time-series model of one such rural 12 kV feeder with a 5 MW generator near the end of the circuit to explore the capability of the generator inverter's reactive power to control voltage. This circuit is particularly interesting because of its low loading, high generation, long distance of the generator from the substation, and large reverse power flow.
Figure \ref{reverse}, shows the feeder current data measured at the substation, taken from the SCADA system of SCE. In this plot, a positive current shows a reverse power flow back to the substation. As you can see, this feeder can easily have a peak reverse flow
of more than 3MW. Four typical days in November 2011 were chosen to represent the four classes of solar radiation behavior introduced in \cite{Ted09}; i.e., clear, cloudy, intermittent clear, and intermittent cloudy days. The later type is the worst case in terms of voltage fluctuations.

Among the means being considered to mitigate these voltage effects is the use of the inverter's inherent reactive power capability to offset its real power effect. An inverter will often have a kVA size on the order of 110\% of its maximum kW output. This leaves 46\% of its capacity for reactive power even at full real power output. Since voltage rise as a function of kvar on a typical distribution circuit is 2 to 3 times that of a function of kW significant voltage regulation is possible.

The aim of this study it to reveal how the reactive power capability of the solar plant inverter can be used to regulate voltage and determine a voltage regulation operation that minimizes power losses by considering (1) CVR effects, (2) line losses, and (3) losses in the inverter due to reactive power flow. Maintaining customer voltage within ANSI C84.1 limits is treated as a constraint. Maintaining circuit power factor at the substation is not treated as a constraint.

 \begin{figure}[t]
\centering
\includegraphics[scale=0.48]{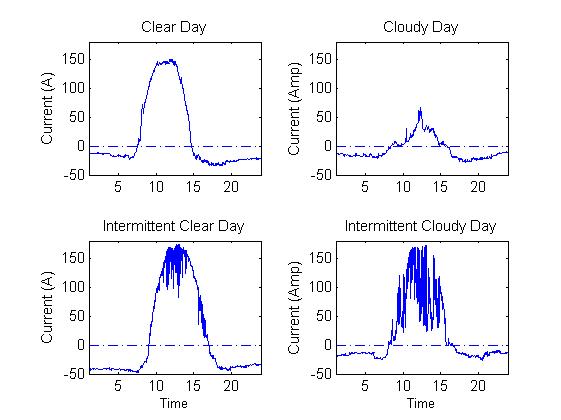}
\caption{\label{SP} Line current measurement at the substation for one of SCE's lightly loaded 12KV feeders with 5MW of PV installed almost at the end of the line. A positive current represents reverse power flow into the substation and a negative current shows real power flowing into the feeder. The plots are from SCADA data of SCE for 4 days in Nov 2011.}\label{reverse}
\end{figure}


\section{Problem formulation and solution}

We will use the balanced radial power flow equations first introduced
in \cite{B89}, called DistFlow equations.  Let $G (N, E)$  be a graph representing a radial distribution circuit.
Each node in $N$ is a bus and each link in $E$ is a line.
We index the nodes by $i = 1, \dots, |N|$. Let $V_i$, $i \geq 0$, denote the complex voltage at node $i$ and $I_{ij}, P_{ij}, Q_{ij}$ be the complex current, active and reactive powers flowing from node $i$ to $j$, respectively.

For node $i$, let $p_i^c$ and $q_i^c$  be the real and reactive power demand,
respectively.
Also let $p_i^g$ and $q_i^g$ be the real and reactive power generation
from the PVs and inverters, respectively, at node $i$.
If there is no load at node $i$, we assume $p_i^c = q_i^c = 0$;
similarly, if there is no PV at node $i$, we set $p_i^g = q_i^g = 0$.
Here, $p_i^c, q_i^c$, and $p_i^g$ are assumed to be given quantities,
whereas the reactive power generations $q_i^g$ are the control variables.

Finally some nodes have  shunt capacitors that are reconfigured on a slow timescale to provide reactive power.
let us define $q_i^{sc}$ to be the rating of the capacitor at node $i$. Then the reactive power generated by the shunt capacitor at node $i$ will be $q_i^{sc} |V_i|^2$ where $q_i^{sc}$ is the reactive power
generated by the shunt capacitor at node $i$ when $|V_i|=1$. If node $i$ has no shunt capacitor or has a shunt capacitor that is turned off in the current state of the slow timescale control, we just simply let $q_i^{sc} = 0$.

Then, from \cite{B89}, these
variables satisfy the following recursion (Dist-Flow equations): for each link
$(i, j)$ in the distribution circuit,
\begin{flushleft}
\begin{eqnarray}
P_{ij}  & = & \sum_{k:(j,k)\in E} {P_{jk}}+r_{ij} \frac{P_{ij}^2+Q_{ij}^2}{|V_i|^2} + p_j^c-p_j^g
\nonumber \\
& &
 \label{Pbalance}\\
Q_{ij}  & = & \sum_{k:(j,k)\in E} {Q_{jk}}+x_{ij} \frac{P_{ij}^2+Q_{ij}^2}{|V_i|^2}
  + q_j^c -q_j^g - q_j^{sc} |V_j|^2
\nonumber \\
& &
\label{Qbalance}\\
|V_j|^2
 & = &|V_i|^2 -2(r_{ij}P_{ij} +x_{ij}Q_{ij})
 + (r_{ij}^2+x_{ij}^2) \frac{P_{ij}^2+Q_{ij}^2}{|V_i|^2}
\nonumber \\
& &
\label{Vdrop}
\end{eqnarray}
\end{flushleft}

In this model, the current magnitude at each link can be determined using the following equation:
\begin{equation}
|I_{ij}|^2 = \frac{P_{ij}^2+Q_{ij}^2}{|V_i|^2}
\label{update}
\end{equation}

\subsection{Constraints}
Aside from the power flow equations (\ref{Pbalance})--(\ref{update}), we will now consider the constraints on voltages and reactive power flows:

\noindent 1) The primary purpose of VVC on distribution circuits is to maintain
voltages in an acceptable range.  This is formulated as constraints on the voltage variables $V_i$,
for all $i\geq 0$ :
\begin{eqnarray}
\underline{V}_i & \leq \ \ |V_i| \ \ \leq  & \overline{V}_i
 \end{eqnarray}

\noindent 2) The magnitude of the reactive power $q_i^g$
generated at an inverter  is upper bounded by a quantity that depends
on the real power generated at node $i$:
\begin{eqnarray}\label{inverter-model}
|q_i^{g}
|  & \leq & \overline{q}_i^g
\end{eqnarray}
where $\overline{q}_i^g := \sqrt{\overline{s}_i^2-\left(p_i^{g}
\right)^2}$, and $\overline{s}_i$ is the nameplate capacity of the inverter at node $i$. This
bound is assumed to be known at each time.
\subsection{Objective Function}
As mentioned earlier, to minimize overall power consumption, we consider line losses, CVR, and inverter losses in our objective function.
\subsubsection{Line losses} This can be simply formulated as
\begin{equation}\sum_{(i,j)\in E}{r_{ij} |I_{ij}|^2} \label{line-losses}
\end{equation}
\subsubsection{CVR}
As discussed in \cite{M11}, considering an exponential power consumption
model, maxmizing CVR savings
is equivalent to minimizing
\begin{equation}\sum{ \alpha_i |V_i|^2} \label{CVR}
\end{equation}
over all
voltage dependent loads where $\alpha_i = (n_i/2)p_i^c$ and $0 \leq n_i \leq 2$ is the exponent factor of consumption model of the load at node $i$. Three special cases are of particular interest: $n_i=0$ for constant power loads,
$n_i=1$ for constant current loads, and $n_i=2$ for constant impedance loads.
\subsubsection{Inverter losses}
DC/AC inverters are not perfect, they have losses. The real power loss in an inverter can be approximated by a quadratic function of its apparent power \cite{Braun07}:
 \begin{eqnarray}
P_{\textrm{loss}}(s) & = &  c_s + c_v s + c_r s^2\\
 & = & c_s + c_v \sqrt{p^2+q^2} + c_r (p^2+q^2) \label{inverter-loss}
 \end{eqnarray}
where $c_s$ models the inverter's standby losses, $c_v$ is the voltage dependent
losses over the power electronic components which is proportional to its current,
$I$, $c_r$ is the ohmic losses proportional to $I^2$, and
$s=\sqrt{p^2+q^2}$ is the magnitude of the apparent power injection of the inverter.
Clearly even though  optimal inverter var control can reduce the line losses
and the energy consumption as measured by the CVR term, its deviation from
unity power factor also increases the inverter real power loss.

 \subsection{Overall Problem}

 If we denote the set of the nodes with DC/AC inverters with $I$, then the overall fast time-scale inverter var control problem can be formulated as follows:

\begin{align}
& {\text{min}}
& &\sum_{(i,j)\in E}{r_{ij} |I_{ij}|^2} + \sum_{i}{\alpha_i  |V_i|^2}+\sum_{i\in I} P_{\textrm{loss}}(s_i) \nonumber \\
& \text{s. t.}
&& (\ref{Pbalance})-(\ref{inverter-model}) \nonumber \\
& \text{over}
&&  X := (P, Q, p^g, p^c, q^g, q^c, |V|^2, |I|^2)
\label{Original}
\end{align}
where $|V|^2$ and $|I|^2$ denote the vector variables
$|V|^2 := \{|V_1|^2, \dots, |V_{|N|}|^2\}$ and
$|I|^2 := \{|I_{ij}|^2, \ \forall (i,j) \in E\}$.

It is easy to see that the objective function of the above optimization problem is convex with respect to the state vector $X$. This is because (\ref{line-losses}), (\ref{CVR}) are linear and the inverter loss term (\ref{inverter-loss}) is a linear combination of a norm and a quadratic function. However, the overall problem is nonconvex since its feasible set, i.e., the set of vectors $X$ that satisfy (\ref{Pbalance})--(\ref{inverter-model}), is nonconvex. Therefore the problem is hard to solve in this form.

\subsection{Solution Method}
Note that if we substitute (\ref{update}) into (\ref{Pbalance})--(\ref{Vdrop}), then all constraints become linear with respect to $X$, except the equality constraint (\ref{update}).
This nonlinear equality constraint is the source of
nonconvexity, which we propose to relax:
\begin{equation}
\forall(i,j)\in E: ~~ |I_{ij}|^2 \geq \frac{P_{ij}^2+Q_{ij}^2}{|V_i|^2}
\label{relaxed-update}
\end{equation}
This is equivalent to relaxing the magnitude of currents on all links.
We will later prove these inequalities will be tight in any optimal solution.
Note that relaxed constraints (\ref{relaxed-update}) represent second order cones with respect to $(P,Q, |V|^2, |I|^2)$. In order to use this relaxation to cast problem (\ref{Original}) into a Second Order Cone Program (SOCP), let us introduce the following new variables for every bus and every line, respectively:
\begin{gather}
\nu_i := |V_i|^2 \\
\ell_{ij} := |I_{ij}|^2
\end{gather}
and the following new variables for all $i \in I$, i.e., every bus with an inverter:
\begin{gather}
s_i:=\sqrt{(p^g_i)^2+(q^g_i)^2}\\
t_i:=s_i^2=(p^g_i)^2+(q^g_i)^2
\end{gather}

Now consider the following relaxed SOCP program to solve the fast time-scale inverter var control problem:

\begin{align}
& {\text{min}}
& &\sum_{(i,j)\in E}{r_{ij} \ell_{ij}} + \sum_{i}{\alpha_i  \nu_i}+\sum_{i\in I}{\left(c_v s_i + c_r t_i \right)}
\label{eq:relaxedprob}\\
& \text{s. t.}
&& P_{ij} =\sum_{k:(j,k)\in E} {P_{jk}}+r_{ij} \ell_{ij} +p_j^c-p_j^g \\
&&& Q_{ij} =\sum_{k:(j,k)\in E} {Q_{jk}}+x_{ij} \ell_{ij} +q_j^c-q_j^g - q_j^c \nu_j
\\ 
&&&  \nu_j = \nu_i -2(r_{ij}P_{ij}+x_{ij}Q_{ij}) + (r_{ij}^2+x_{ij}^2) \ell_{ij} \\
&&&  \forall(i,j)\in E: ~~ \ell_{ij} \geq \frac{P_{ij}^2+Q_{ij}^2}{\nu_i} \label{relaxed} \\%
&&&  \forall i\in I: s_i \geq \sqrt{(p^g_i)^2+(q^g_i)^2} \label{norm-relax}\\
&&&  \forall i\in I: t_i \geq (p^g_i)^2+(q^g_i)^2 \label{quad-relax}\\
&&&  \forall i\in I: |q_i^{g}| \leq  \overline{q}_i^g \\
&&& \underline{V}_i^2 \leq \nu_i \leq \overline{V}_i^2 \label{vlimits}\\
&&& \underline{p}_i^c \leq p_i^c ~~,~~ \underline{q}_i^c \leq q_i^c  \label{oversatisfaction}\\
& \text{over}
&&  X := (P, Q, p^g, p^c, q^g, q^c, \nu, \ell, s, t)
\label{eq:vars}
\end{align}

Using relaxations (\ref{norm-relax}), (\ref{quad-relax}) is the typical method of linearizing the Euclidean norms and quadratic terms in the objective function and casting them as second order cone constraints. It is easy to see that these inequalities will be tight in the solution. We have also made two other relaxations in the problem formulation.  First
 the equalities (\ref{update}) in the original problem are relaxed to inequalities
(\ref{relaxed}).  Second, in (\ref{oversatisfaction}), we have used the over-satisfaction of active and reactive loads (see \cite{M11}, \cite{Sojoudi2011}) .
Our key result is

\vspace{2mm}
\begin{theorem}
\label{thm1}
The volt/var control problem (\ref{eq:relaxedprob})--(\ref{eq:vars}) is convex.
Moreover, it is exact, i.e., any optimal solution of (\ref{eq:relaxedprob})--(\ref{eq:vars})
achieves equality in (\ref{relaxed}), (\ref{norm-relax}),(\ref{quad-relax}), and therefore specifies valid and
optimal inverter reactive generation $q_i^g$ and voltage magnitudes $|V_i|$.
\end{theorem}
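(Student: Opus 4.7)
My plan is to attack the two conclusions of Theorem~\ref{thm1}---convexity and exactness of the three relaxations---in order of increasing difficulty, leaving the hard case of \eqref{relaxed} for last. Convexity is immediate from inspection: after the substitutions $\nu_i = |V_i|^2$ and $\ell_{ij} = |I_{ij}|^2$, the objective in \eqref{eq:relaxedprob} is a linear combination of $\ell_{ij}, \nu_i, s_i, t_i$ with nonnegative coefficients $r_{ij},\alpha_i,c_v,c_r$, the balance equations on $P$, $Q$, $\nu$ are linear, constraint \eqref{relaxed} is the rotated second-order cone $\nu_i\ell_{ij}\ge P_{ij}^2+Q_{ij}^2$, \eqref{norm-relax} and \eqref{quad-relax} are standard second-order cones, and the remaining bounds are linear.

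Exactness of \eqref{norm-relax} and \eqref{quad-relax} follows from a one-line monotonicity argument: the variables $s_i$ and $t_i$ appear nowhere except in these two constraints and in the objective, weighted by the nonnegative coefficients $c_v$ and $c_r$. A strict inequality would let me shrink $s_i$ or $t_i$ slightly without disturbing any other constraint, strictly reducing the objective and contradicting optimality.

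The heart of the proof is the exactness of \eqref{relaxed}, which I would attack by contradiction. Suppose an optimum $X^*$ has a link $(i_0,j_0)$ with $\ell^*_{i_0 j_0} > (P^{*2}_{i_0 j_0} + Q^{*2}_{i_0 j_0})/\nu^*_{i_0}$. The plan is to build a feasible perturbation $X'$ with a strictly smaller objective by decreasing $\ell_{i_0 j_0}$ by a small $\delta>0$; keeping all $P$, $Q$, and every upstream $\nu$ unchanged; restoring \eqref{Pbalance} and \eqref{Qbalance} at $(i_0,j_0)$ by raising $p_{j_0}^c$ and $q_{j_0}^c$ by $r_{i_0 j_0}\delta$ and $x_{i_0 j_0}\delta$ respectively (which is exactly the purpose of the over-satisfaction slack \eqref{oversatisfaction}); and letting \eqref{Vdrop} propagate a uniform decrease of $(r_{i_0 j_0}^2+x_{i_0 j_0}^2)\delta$ in every $\nu_k$ in the subtree rooted at $j_0$. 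The line-loss term then falls by $r_{i_0 j_0}\delta$, the CVR term $\sum_k\alpha_k\nu_k$ falls further because every affected $\nu_k$ shrinks, and the inverter-loss and upstream terms are untouched, so the objective strictly decreases.

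The main obstacle will be verifying feasibility of this perturbation throughout the subtree below $j_0$. The lower voltage bounds in \eqref{vlimits} hold for small $\delta$ by continuity, and the $q_k^{sc}\nu_k$ terms in \eqref{Qbalance} at nodes carrying shunt capacitors can be absorbed by further upward over-satisfaction of $q_k^c$. The delicate point is that the downstream rotated-cone inequalities \eqref{relaxed} at links $(j,k)$ in the subtree have right-hand sides $(P_{jk}^2+Q_{jk}^2)/\nu_j$ that grow when $\nu_j$ shrinks, so they could be violated if they were already tight at $X^*$. Handling this cleanly---either by ordering the construction so those downstream constraints remain strictly slack, or by iterating the tightening step until every SOC is active---is where I expect the real bookkeeping to be, and is precisely why the over-satisfaction relaxations were built into the formulation.
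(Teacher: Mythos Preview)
Your sketch is sound and actually goes beyond what the paper does: the paper's own ``proof'' is a two-sentence pointer to \cite{M11}, observing only that the new inverter-loss terms $c_v s_i + c_r t_i$ and the extra cones \eqref{norm-relax}, \eqref{quad-relax} involve variables that never enter the \cite{M11} argument, so that argument carries over unchanged. Your monotonicity step for $s_i$ and $t_i$ is exactly this observation made explicit, and your contradiction-by-perturbation attack on \eqref{relaxed}---shrink $\ell_{i_0 j_0}$, absorb the power-balance residuals via the over-satisfaction slack \eqref{oversatisfaction}, and let the voltage drop propagate through the subtree---is the standard line of the cited proof. So the approaches coincide; you are simply writing out what the paper delegates.

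Two points to watch when you fill in the bookkeeping. First, a sign slip: when $\nu_k$ \emph{decreases} at a downstream node with a shunt capacitor, the term $-q_k^{sc}\nu_k$ in the $Q$-balance \emph{increases}, so restoring equality requires \emph{lowering} $q_k^c$, not raising it; over-satisfaction pushes the wrong way here, and you will need either strict slack in $q_k^c$ or to let $Q_{jk}$ move instead. Second, your instinct about the downstream cones is exactly the crux, but the ``iterate until every cone is tight'' fix needs a net-accounting argument: re-tightening a downstream $\ell_{jk}$ costs $r_{jk}$ times its increment in the objective, and you must show this cannot cancel the $r_{i_0 j_0}\delta$ saving. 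The cleanest route is to choose the slack link and the propagation direction carefully (as in \cite{M11}) rather than to iterate blindly.
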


\begin{proof}
A slightly simpler version of this theorem is proved in \cite{M11}. The only difference is the addition of more linear terms due to inverter losses in the objective function and more second order cone constraints in (\ref{norm-relax}), (\ref{quad-relax}). However, the same proof can be applied here as well since the active and reactive power generation variables are not involved in the argument stated in \cite{M11}.
\end{proof}

\vspace{2mm}
Theorem \ref{thm1}  implies that the original nonconvex optimal inverter control problem (\ref{Original})
can be efficiently solved using the above SOCP relaxation. This opens the way to implement efficient volt/var control in real time to cope with random, rapid, and large fluctuations of solar
generation.

\section{Evaluation}

In this section we evaluate the effectiveness of the proposed inverter control
using data from one of SCE's distribution circuits.
 \begin{figure*}[t]
\centering
\includegraphics[scale=0.45]{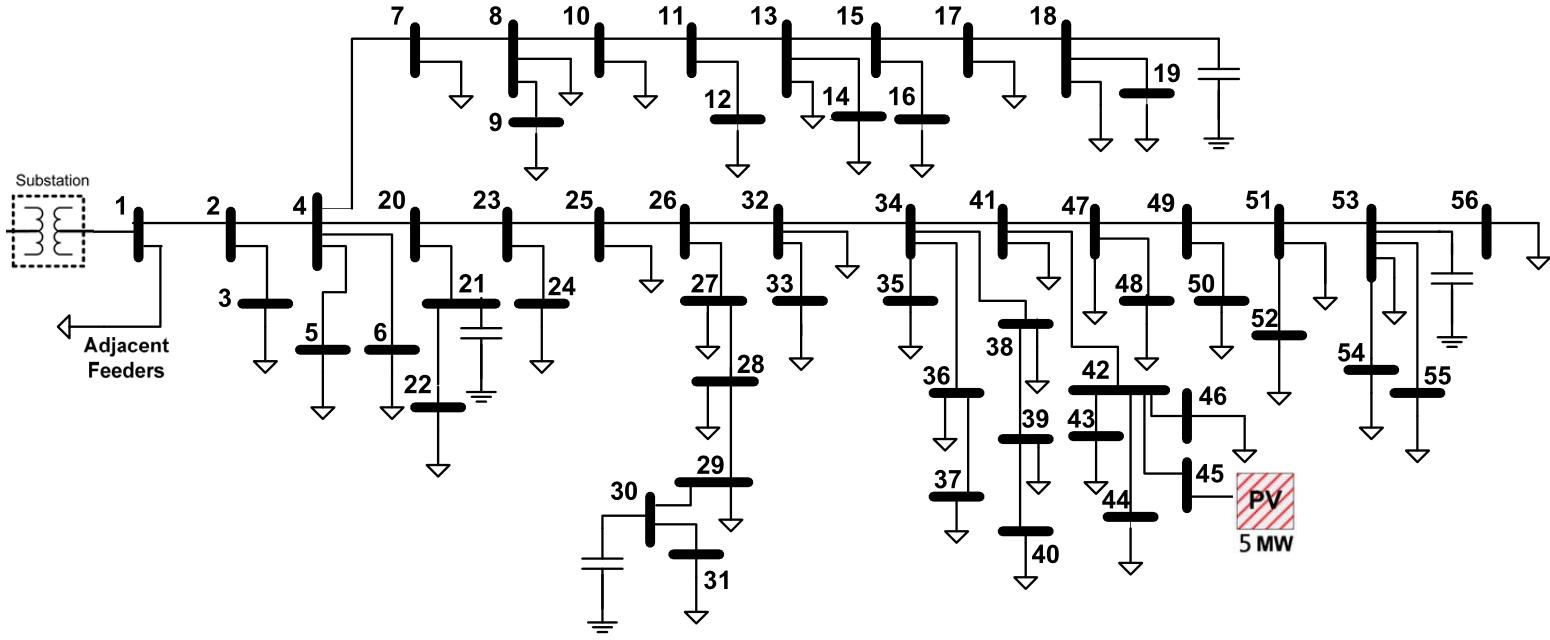}
\caption{Circuit diagram for SCE distribution system.}
 \label{fig:DistrCircuit}
\end{figure*}
The load data in Table \ref{data} are peak values.  Historical data shows a
typical day time loading of around 20\% of the peak with a typical power factor of 0.9.
\begin{centering}
\begin{table*}

\caption{Line impedances, peak spot load kVA, capacitors and PV generation's
nameplate ratings for the distribution circuit in Figure \ref{fig:DistrCircuit}. }
\centering
\scriptsize
\begin{tabular}{|c|c|c|c|c|c|c|c|c|c|c|c|c|c|c|c|c|c|}
\hline
\multicolumn{18}{|c|}{Network Data}\\
\hline
\multicolumn{4}{|c|}{Line Data}& \multicolumn{4}{|c|}{Line Data}& \multicolumn{4}{|c|}{Line Data}& \multicolumn{2}{|c|}{Load Data}& \multicolumn{2}{|c|}{Load Data}&\multicolumn{2}{|c|}{Load Data}\\
\hline
From&To&R&X&From&To& R& X& From& To& R& X& Bus& Peak & Bus& Peak &  Bus &Peak  \\
Bus.&Bus.&$(\Omega)$& $(\Omega)$ & Bus. & Bus. & $(\Omega)$ & $(\Omega)$ & Bus.& Bus.& $(\Omega)$ & $(\Omega)$ & No.&  MVA& No.& MVA& No.& MVA\\
\hline

1	&	2	&	0.160	&	0.388	&	20	&	21	&	0.251	&	0.096	&	39	&	40	&	2.349	&	 0.964	&	3	&	0.057	&	29  &	0.044  & 52& 0.315	 \\
2	&	3	&	0.824	&	0.315	&	21	&	22	&	1.818	&	0.695	&	34	&	41	&	0.115	&	 0.278	&	5	&	0.121	&	31	&	0.053  & 54& 	 0.061	 \\
2	&	4	&	0.144	&	0.349	&	20	&	23	&	0.225	&	0.542	&	41	&	42	&	0.159	&	 0.384	&	6	&	0.049	&	32	&	0.223 & 55&	 0.055	 \\
4	&	5	&	1.026	&	0.421	&	23	&	24	&	0.127	&	0.028	&	42	&	43	&	0.934	&	 0.383	&	7	&	0.053	&	33	&	0.123 & 56&	 0.130	 \\\cline{17-18}
4	&	6	&	0.741	&	0.466   &	23	&	25	&	0.284	&	0.687	&	42	&	44	&	0.506	&	 0.163	&	8	&	0.047	&	34	&	0.067 & \multicolumn{2}{c|}{Shunt Cap}	 \\\cline{17-18}
4	&	7	&	0.528	&	0.468	&	25	&	26	&	0.171	&	0.414	&	42	&	45	&	0.095	&	 0.195	&	9	&	0.068	&	35	&	0.094&   \multicolumn{1}{c|}{Bus} &	 \multicolumn{1}{c|}{Mvar}			 \\\cline{17-18}
7	&	8	&	0.358	&	0.314	&	26	&	27	&	0.414	&	0.386	&	42	&	46	&	1.915	&	 0.769	&	10	&	0.048	&	36	&	0.097&  19& 	 0.6 	 \\
8	&	9	&	2.032	&	0.798	&	27	&	28	&	0.210	&	0.196	&	41	&	47	&	0.157	&	 0.379	&	11	&	0.067	&	37	&	0.281&  21&	0.6 	 \\
8	&	10	&	0.502	&	0.441	&	28	&	29	&	0.395	&	0.369	&	47	&	48	&	1.641	&	 0.670	&	12	&	0.094	&	38	&	0.117&  30&	0.6 		 \\
10	&	11	&	0.372	&	0.327	&	29	&	30	&	0.248	&	0.232	&	47	&	49	&	0.081	&	 0.196	&	14	&	0.057	&	39	&	0.131& 53&	0.6 		 \\\cline{17-18}
11	&	12	&	1.431	&	0.999	&	30	&	31	&	0.279	&	0.260	&	49	&	50	&	1.727	&	 0.709	&	16	&	0.053	&	40	&	0.030& \multicolumn{2}{c|}{Photovoltaic}		 \\\cline{17-18}
11	&	13	&	0.429	&	0.377	&	26	&	32	&	0.205	&	0.495	&	49	&	51	&	0.112	&	 0.270	&	17	&	0.057	&	41	&	0.046& \multicolumn{1}{c|}{Bus} &	 \multicolumn{1}{c|}{Capacity}		 \\\cline{17-18}
13	&	14	&	0.671	&	0.257	&	32	&	33	&	0.263	&	0.073	&	51	&	52	&	0.674	&	 0.275	&	18	&	0.112	&	42	&	0.054&   & \\
13	&	15	&	0.457	&	0.401	&	32	&	34	&	0.071	&	0.171	&	51	&	53	&	0.070	&	 0.170	&	19	&	0.087	&	43	&	0.083&   45 &		 5MW		 \\\cline{17-18}
15	&	16	&	1.008	&	0.385	&	34	&	35	&	0.625	&	0.273	&	53	&	54	&	2.041	&	 0.780	&	22	&	0.063	&	44	&	0.057&  \multicolumn{2}{c|}{} \\
15	&	17	&	0.153	&	0.134	&	34	&	36	&	0.510	&	0.209	&	53	&	55	&	0.813	&	 0.334	&	24	&	0.135	&	46	&	0.134&  \multicolumn{2}{c|}{$V_\textrm{base}$ = 12kV}	 \\
17	&	18	&	0.971	&	0.722	&	36	&	37	&	2.018	&	0.829	&	53	&	56	&	0.141	&	 0.340	&	25	&	0.100	&	47	&	0.045& \multicolumn{2}{c|}{$S_\textrm{base}$ = 1MVA} 	 \\
18	&	19	&	1.885	&	0.721	&	34	&	38	&	1.062	&	0.406	&		&		&		&		 &	27	&	48	&	48	&	0.196&  \multicolumn{2}{c|}{$Z_\textrm{base}= 144 \Omega$ }		 \\
4	&	20	&	0.138	&	0.334	&	38	&	39	&	0.610	&	0.238	&		&		&		&		 &	28	&	38	&	50	&	0.045 &  \multicolumn{2}{c|}{}		 \\

\hline

\end{tabular}
\label{data}
\end{table*}
\end{centering}

\subsection{Simulation setup}

We have chosen one of SCE's distribution feeders with very high penetration of
Photovoltaics.   We use historical SCADA data for load and PV generation to
illustrate the ideas and potential benefits of inverter var control.  This is a very
lightly loaded rural distribution feeder (less than 1MW) in which a 5MW PV has
been integrated almost 6 miles away from the substation.
The circuit diagram of the distribution system is shown Figure \ref{fig:DistrCircuit}
and the various parameters are given in Table \ref{data}.

In the simulation results discussed below, the voltage magnitude of bus
1 at the substation is fixed at 1 pu.  The voltage magnitude bounds at all
other buses are assumed to be 0.97 pu and 1.03 pu.

Figure \ref{fig:Vpcc} shows the voltage magnitude at bus 45, i.e. the Point of Common Coupling (PCC),
as a function of solar output, when the inverter provides no var
control, i.e., unity power factor with $q_i^g =0$, according IEEE 1547.  The load
is in percentage of peak load.  The figure shows that the range of voltage
fluctuation
can exceed 5\% as solar output varies from 0 to 5 MW (its nameplate capacity).
 \begin{figure}[t]
\centering
\includegraphics[scale=0.35]{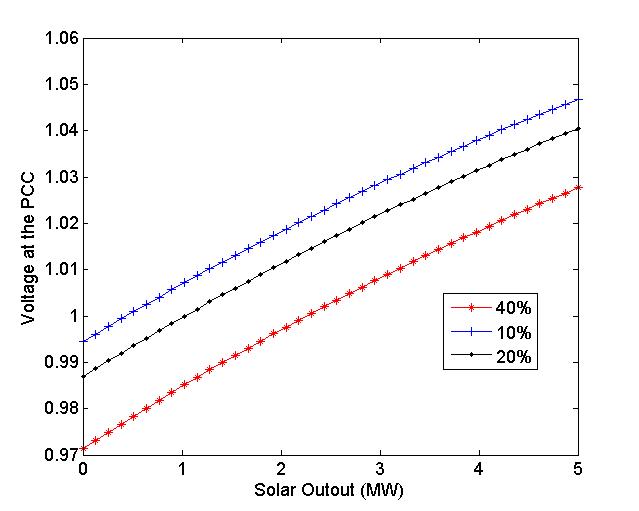}
\caption{Voltage magnitude at the point of common coupling (PCC) vs solar output. }
\label{fig:Vpcc}
\end{figure}
It clearly demonstrates the need for fast timescale inverter var control or an equivalent mitigation strategy
to cope with random, rapid, and large solar output
fluctuations.

\subsection{Optimal inverter var injection}

In this subsection we examine the optimal inverter var injection, defined by $q_i^{g*}$,
at the PV bus under the proposed optimal control, as solar output and load varies.
Multiple factors interact to determine the optimal inverter var injection.
In particular, there is a
tradeoff between the line loss term and the CVR term in the objective
function: a higher voltage magnitude reduces line loss but increases
energy consumption in the CVR term.   The optimal tradeoff is determined
by the solar output and the total load, and the constraints (3\% tolerance)
on the voltage magnitude.   Intuitively, one should
increase var injection (more capacitive) either when the voltage magnitude is
low in order to keep it above its lower bound, or when the solar is high so
as to minimize the line losses in transferring power from the PV bus towards
the substation.   Conversely, one should decrease var
injection (more inductive) either when the voltage magnitude is high in
order to keep it below its upper bound or when the load is high in order
to decrease energy consumption due to the CVR term in the objective
function.
We now take a closer look at these interactions, as solar output and as
load varies

\subsection{Results}

Figures \ref{fig:qvsPV.lowload} and \ref{fig:qvsPV.highload} show the optimal
inverter var injection as a function of solar output for a fixed total load.
At low load (Figure \ref{fig:qvsPV.lowload}),
as solar output increases, the optimal inverter var injection $q_i^{g*}$ initially
increases so as to minimize line losses in transferring power from the PV bus
towards the substation.   Eventually, as the solar output continues to rise
above a threshold the optimal inverter var injection $q_i^{g*}$ decreases
(absorbs var) in order to maintain a voltage magnitude within its upper bound.

The opposite effect dominates at high load (Figure \ref{fig:qvsPV.highload})
when the voltage magnitude is typically well below the upper bound.
 As solar output increases, the
optimal inverter var injection $q_i^{g*}$ decreases so as to reduce the
energy consumption due to the CVR term in the objective function.
We can also see the transition between these two phenomena
in these Figures.

 \begin{figure}[t]
\centering
\includegraphics[scale=0.53]{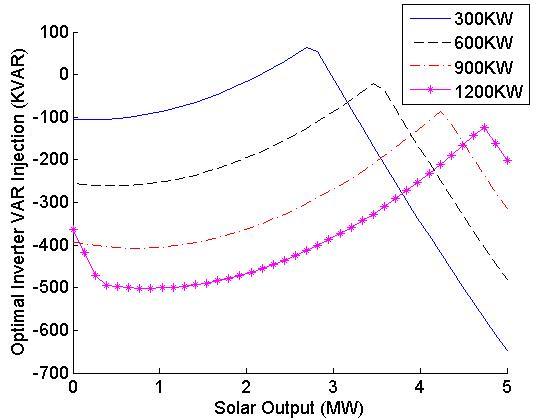}
\caption{Optimal inverter reactive power (in kvar) vs PV output when load is low. }
\label{fig:qvsPV.lowload}
\end{figure}

\begin{figure}[t]
\centering
\includegraphics[scale=0.4]{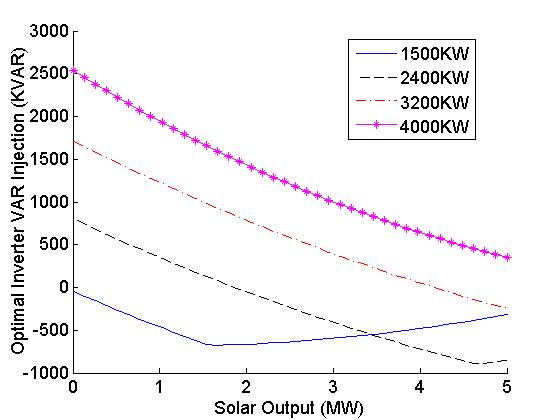}
\caption{Optimal inverter reactive power (in kvar) vs PV output when load is high.}
\label{fig:qvsPV.highload}
\end{figure}

Figure \ref{fig:qvsload} shows the optimal
inverter var injection as a function of total load for a fixed solar output.
\begin{figure}[t]
\centering
\includegraphics[scale=0.4]{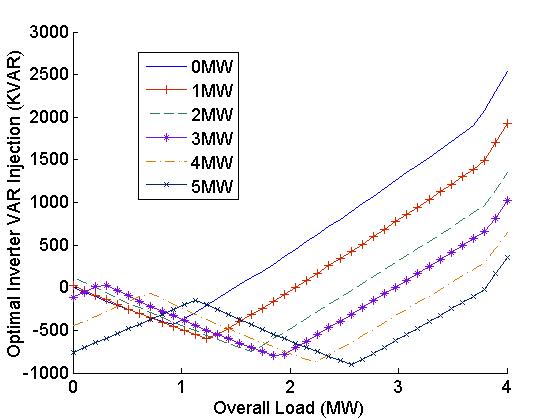}
\caption{Optimal inverter reactive power (in kvar) vs total load.}
\label{fig:qvsload}
\end{figure}
At low solar output, optimal inverter var injection decreases as load
initially increases to as to reduce the energy consumption in the CVR
term, until the load reaches a threshold that reduces the voltages to
near the lower bounds.   Beyond that threshold, the optimal inverter
var injection increases as load increases so as to maintain the voltage
magnitudes above the lower bounds.
At high solar output, on the other hand, the above behavior is preceded
by section where the line losses term dominates over the CVR term.
Then the optimal inverter var injection increases as load
initially increases from zero in order to reduce line losses.

\subsection{Benefits of optimal inverter var control}

We implemented the proposed convex relaxation
of the radial OPF problem (\ref{Pbalance})--(\ref{Vdrop})
and solved it using CVX  optimization toolbox \cite{Boyd} in Matlab.
 In all our simulations,
 we checked the inequality constraint in condition (\ref{relaxed}) for optimal solutions
 of the relaxed problem and confirmed that the inequality
 constraints were all active, i.e., equality holds at the optimal solutions.

We will assess the benefit of the proposed optimal inverter var
control in two ways.  First, when inverters do not participate in var
control (i.e., unity power factor as specified in the current IEEE 1547
standard), the voltage magnitudes may violate the specified limits
when the total load is low and solar power is high.  This represents
an undesirable operation mode.  The proposed optimal inverter
var control should help maintain the voltage magnitudes within
their specified limits and thus enlarge the region of desirable operation
mode.

Second, the net cost is the sum of line losses and energy consumption,
as expressed in (\ref{eq:relaxedprob}), plus the real power loss in the inverter.

We have used typical loss model parameters from \cite{Braun07} for an inverter with maximum efficiency of 97\% to evalue the total cost. Table  \ref{resultsn} summarizes  these two benefits using the distribution
  circuit specified in Table \ref{data}.
\begin{centering}
\begin{table}
\caption{Simulation Results For Some Voltage Tolerance Thresholds }
\centering
    \begin{tabular}{ | l | l | l | }
    \hline
    Voltage Drop & Annual Hours Saved Spending& Average  Power\\
    Tolerance& Outside Feasibility Region &  ~~~Saving  \\ \hline
    ~~~~~~3\%&   ~~~~~~~~~~~~~~~1,214h& ~~~~~1.15\%\\ \hline
    ~~~~~~4\%&  ~~~~~~~~~~~~~~ 223h &~~~~~1.34\%\\ \hline
    ~~~~~~5\%&  ~~~~~~~~~~~~~~~37h&~~~~ 1.42\%\\ \hline
    \end{tabular}
\label{resultsn}
\end{table}
\end{centering}
First, when inverters do not participate in var control,
the feeder spends significant amount of time (e.g., 1,214 hours per
year with 3\% voltage drop tolerance) outside the feasibility region where
voltage magnitudes violate their specified limits.
Under the proposed
optimal inverter  var control, this undesirable operation mode is almost completely
eliminated.  Second the optimal control yields energy savings (above
1 \%), as measured by the total cost that includes the inverter real power loss.
Note that the savings in total cost are calculated only for times where both
the unity power factor control and the optimal control are feasible.  As the
voltage drop tolerance decreases, the unity power factor control becomes
infeasible more often while the optimal control remains feasible, but the
corresponding savings are excluded in the calculation.

%
%
%
%
%
%

\vspace{5mm}

\section{Conclusion}

This study demonstrates the benefits of inverter var control on a fast timescale to mitigate rapid and large voltage fluctuations due to high penetration of photovoltaic generation and the resulting reverse power flow. The problem was formulated as a radial OPF problem that minimizes line losses and energy consumption, subject to constraints on voltage magnitudes. This problem is generally non-convex and hard to solve. A convex relaxation was derived that can be solved efficiently and proved to be exact on radial networks. Finally, we have used it to compute the optimal inverter var injections and illustrate the improvement in both voltage regulation and efficiency under the optimal control. The results are illustrated for an SCE distribution feeder with a very light load and a 5 MW PV system installed 6 miles from the substation.

\vspace{5mm}

\section*{Acknowledgment}
The authors would like to acknowledge Prof.~Mani Chandy of Caltech, the colleagues at Advanced Technology division of Southern California Edison, and the Resnick Institute at Caltech,
for their support and insightful comments.

\vspace{5mm}

\vspace{0cm}
\begin{IEEEbiography}[{\includegraphics[width=1in,height=1.25in,clip,keepaspectratio]{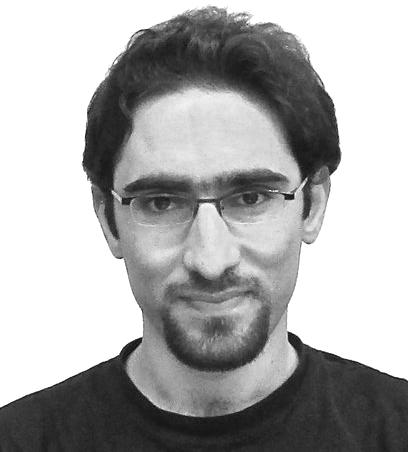}}]{Masoud Farivar}
is a Ph.D. student of electrical engineering at California Institute of Technology, where he is a fellow of the Resnick Institute for Sustainable Energy. Since June of 2010, he has been working in the Advanced Technology division of Southern California Edison on a number of smart grid projects. Masoud received his B.S. degree with a double major in electrical engineering and computer science from Sharif University of Technology, Tehran, Iran, in 2009.

\end{IEEEbiography}

\begin{IEEEbiography}[{\includegraphics[width=1in,height=1.25in,clip,keepaspectratio]{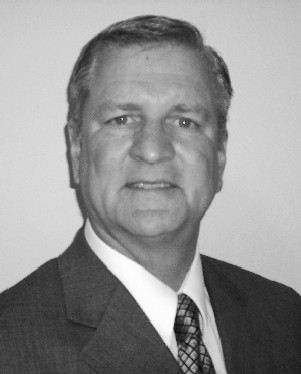}}]{Russell Neal} is strategic program manager for the Southern California Edison Company. He specializes in smart grid issues with an emphasis on distribution systems. Neal is a twenty year IEEE member and graduate of the U. S. Naval Academy. He holds an MEEE from the University of Idaho and an MBA from Azusa Pacific University. He is a registered professional engineer in electrical and nuclear engineering.

\end{IEEEbiography}

\vspace{1cm}
\noindent \small{\textbf{Christopher Clarke }
joined Southern California Edison in April 2004. His primary experience at SCE includes distribution analysis, distribution planning, and power systems modeling. He earned a Master of Science degree in electrical engineering at the University of Southern California and his Bachelor of Science degree in electrical engineering from the University of California at Los Angeles. He is a registered professional engineer in the state of California and a member of the IEEE. }

\vspace{-12cm}
\begin{IEEEbiography}[{\includegraphics[width=1in,height=1.25in,clip,keepaspectratio]{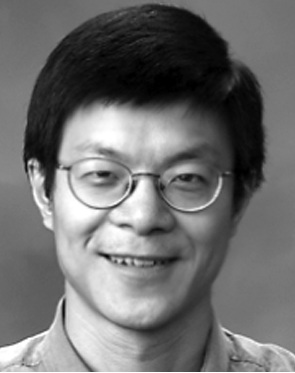}}]{Steven H. Low} (F '08)
received BS from Cornell and PhD from Berkeley, both in EE.
He is a Professor of the Computing \& Mathematical Sciences and Electrical Engineering departments at Caltech, and hold guest faculty positions with the Swinbourne University, Australia and Shanghai Jiaotong University, China. Prior to that, he was with AT\&T Bell Laboratories, Murray Hill, NJ, and the University of Melbourne, Australia. He was on the Editorial Board of the IEEE/ACM Transactions on Networking, IEEE
Transactions on Automatic Control, ACM Computing Surveys, and the
Computer Networks Journal. He is currently a Senior Editor of the
IEEE Journal on Selected Areas in Communications, and series mentor
for its Smart Grid Series.
\end{IEEEbiography}

\end{document}